\definecolor{light-gray}{gray}{0.95}
\newcounter{subeqn} %
\pgfplotsset{
  grid style = {
    dash pattern = on 0.025mm off 0.95mm on 0.025mm off 0mm, 
    line cap = round,
    black,
    line width = 0.5pt
  },
  tick label style={font=\small},
  label style={font=\small},
  legend style={font=\footnotesize},
}
\newacronym{ADMM}{ADMM}{Alternating Direction Method of Multipliers}
\newacronym{AWGN}{AWGN}{additive white Gaussian noise}
\newacronym{APC}{APC}{area power consumption}
\newacronym{ASE}{ASE}{area spectral efficiency}
\newacronym{BS}{BS}{base station}
\newacronym{CN}{CN}{core network}
\newacronym{CS}{CS}{central scheduler}
\newacronym{CR}{CR}{central router}
\newacronym{D2D}{D2D}{device-to-device}
\newacronym{iid}{i.i.d.}{independent and identically distributed}
\newacronym{EE}{EE}{energy efficiency}
\newacronym{LTE}{LTE}{long term evolution}
\newacronym{massive-MIMO}{massive-MIMO}{massive multiple-input multiple-output}
\newacronym{MBS}{MBS}{macro base station}
\newacronym{MIMO}{MIMO}{multiple-input multiple-output}
\newacronym{MU}{MU}{macro cell user}
\newacronym{PP}{PP}{point process}
\newacronym{PPP}{PPP}{{P}oisson point process}
\newacronym{RAN}{RAN}{radio access network}
\newacronym{SBS}{SBS}{small base station}
\newacronym{SNR}{SNR}{signal-to-noise ratio}
\newacronym{SINR}{SINR}{signal-to-interference-plus-noise ratio}
\newacronym{SIR}{SIR}{signal-to-interference ratio}
\newacronym{SCN}{SCN}{small cell network}
\newacronym{SU}{SU}{small cell user}
\newacronym{UT}{UT}{user terminal}
\newacronym{QoS}{QoS}{quality-of-service}
\newacronym{QoE}{QoE}{quality-of-experience}
\newacronym{PDF}{PDF}{probability distribution function}
\newacronym{PGFL}{PGFL}{probability generating functional}
\newacronym{HetNet}{HetNet}{heterogeneous network}
\newtheorem{definition}{Definition}
\newtheorem{theorem}{Theorem}
\newtheorem{remark}{Remark}
\begin{document}
\title{Edge Caching for Coverage and Capacity-aided Heterogeneous Networks}
\author{
		\IEEEauthorblockN{Ejder Baştuğ$^{\diamond}$, Mehdi Bennis$^{\dagger}$, Marios Kountouris$^{\circ}$ and Mérouane Debbah$^{\diamond, \circ}$\\}
		\IEEEauthorblockA{
				\small
				$^{\diamond}$Large Networks and Systems Group (LANEAS), CentraleSupélec, \\ Université Paris-Saclay, 3 rue Joliot-Curie,  91192 Gif-sur-Yvette, France\\	
				$^{\dagger}$Centre for Wireless Communications, University of Oulu, Finland \\
				$^{\circ}$Mathematical and Algorithmic Sciences Lab, Huawei France R\&D, Paris, France \\	
				\{ejder.bastug, merouane.debbah\}@centralesupelec.fr, marios.kountouris@huawei.com, bennis@ee.oulu.fi \\
				\vspace{-1.15cm}
		}
		\thanks{This research has been supported by the ERC Starting Grant 305123 MORE (Advanced Mathematical Tools for Complex Network Engineering), the project BESTCOM, the Academy of Finland CARMA project and TEKES grant (2364/31/2014).}
}
\IEEEoverridecommandlockouts
\maketitle	

\begin{abstract}
A two-tier heterogeneous cellular network (HCN) with intra-tier and inter-tier dependence is studied. The macro cell deployment follows a Poisson point process (PPP) and two different clustered point processes are used to model the cache-enabled small cells. Under this model, we derive approximate expressions in terms of finite integrals for the average delivery rate considering inter-tier and intra-tier dependence. On top of the fact that cache size drastically improves the performance of small cells in terms of average delivery rate, we show that rate splitting of limited-backhaul induces non-linear performance variations, and therefore has to be adjusted for rate fairness among users of different tiers. 
\end{abstract}
\begin{IEEEkeywords}
edge caching, clustered point processes, heterogeneous cellular networks, stochastic geometry 
\end{IEEEkeywords} 
\section{Introduction}
Caching in heterogeneous cellular networks (HCN) significantly improves the system performance, and is of cardinal importance especially in limited-backhaul settings \cite{Bastug2014CacheEnabledExtendedArxiv, Shanmugam2013Femtocaching}. However, existing stochastic geometry-based analyses of caching in heterogeneous networks \cite{Yang2015Analysis,  Chen2016Cooperative}  ignore the impact of limited-backhaul and consider multiple tiers of mutually independent point processes. To remedy to this situation, this paper analyzes the benefits of edge caching in such wireless deployment conditions.

Consider a heterogeneous network consisting of mobile users, clustered cache-enabled \glspl{SBS}, \glspl{MBS} and \glspl{CR} for the backhaul. For \glspl{SBS}, we shall consider two different topologies: 1) coverage-aided and 2) capacity-aided deployments. Coverage-aided deployment follows a \emph{Poisson hole process} (PHP) while capacity-aided deployment is modeled using a \emph{Matérn cluster process} (MCP). This models inter-tier and intra-tier dependence, respectively. 
Due to the multi-tier dependence and the non-tractability of the considered point processes, an exact calculation of the interference and key performance metrics seems unfeasible. Hence, we restrict ourselves to approximations in the form of finite integrals and validate their behavior via numerical results. 

This hierarchical model exploiting random spanning trees hides the algorithmic details of such a complex network, yet allows for tractable mathematical expressions to characterize the overall performance of such an heterogeneous network. We show that while the average delivery rate of \glspl{SU} can be improved by adding more storage per \ac{SBS}, the backhaul rate splitting among tiers induces non-linear performance variations and requires adjustments for the rate fairness.
\section{System Model}
\label{sec:systemmodel}
The system model is composed of \glspl{MBS}, cache-enabled clustered \glspl{SBS}, \glspl{CR} and mobile users. We focus on two different topologies as  in \cite{Deng2014Heterogeneous}.
\subsection{Coverage-Aided Topology}
\label{sec:topologyI}
We model a heterogeneous cellular network which consists of \glspl{MBS} and cache-enabled \glspl{SBS}. The \glspl{MBS} are modeled by an independent homogeneous planar \ac{PPP} of intensity $\lambda_{\mathrm{mc}}$, denoted by $\Phi_{\mathrm{mc}} = \{y_i\}_{i \in \mathbb{N}}$ where $y_i$ denotes the location of the $i$-th \ac{MBS}. Additionally, the \emph{potential} \glspl{SBS} are located according to another independent homogeneous \ac{PPP} of intensity $\lambda_{\mathrm{sc'}}$, denoted by $\Phi_{\mathrm{sc'}} = \{x'_i\}_{i \in \mathbb{N}}$ where $x'_i$ represents the location of the $i$-th \ac{SBS}. We suppose that each \ac{MBS} has an exclusion region which is made of a disk with radius $R_{\mathrm{c}}$ centered at the position of \ac{MBS}. Assuming that the aim of \glspl{SBS} is to fill the coverage holes of \glspl{MBS} to provide a better service to users, these \glspl{SBS} are deployed outside of the exclusion regions of \glspl{MBS}. Therefore, the deployed \glspl{SBS} form clusters according to a \emph{Poisson hole process} (Cox process) as follows \cite{Haenggi2012Stochastic}.
	\begin{definition}[Clustering process of coverage-aided \glspl{SBS}]\label{def:smallCellCovPhole} 
	Let $\Phi_{\mathrm{mc}}$ be a homogeneous \ac{PPP} of intensity $\lambda_{\mathrm{mc}}$ for \glspl{MBS} and $\Phi_{\mathrm{sc'}}$ be an independent and homogeneous \ac{PPP} of intensity $\lambda_{\mathrm{sc'}}$ for potential \glspl{SBS}, with $\lambda_{\mathrm{sc'}} > \lambda_{\mathrm{mc}}$. For each $y \in \Phi_{\mathrm{mc}}$, remove all the points in
		\begin{equation}
			\Phi_{\mathrm{sc'}} \cap  \mathcal{B}(y, R_{\mathrm{c}})
		\end{equation}
		where $\mathcal{B}(y, R_{\mathrm{c}})$ is the ball of radius $R_{\mathrm{c}}$ centered at $y$. 	Then, the remaining points of $\Phi_{\mathrm{sc'}}$ form clusters, known as the Poisson hole process $\Phi_{\mathrm{sc}}$ and represent the deployed \glspl{SBS}. Moreover, this process has the intensity of 
		\begin{equation}
			\lambda_{\mathrm{sc}} = \lambda_{\mathrm{sc'}} \mathrm{exp}(-\lambda_{\mathrm{mc}} \pi R_{\mathrm{c}}^2).
		\end{equation}
	\end{definition}
\begin{figure}[ht!]
	\centering
	\includegraphics[width=0.70\linewidth]{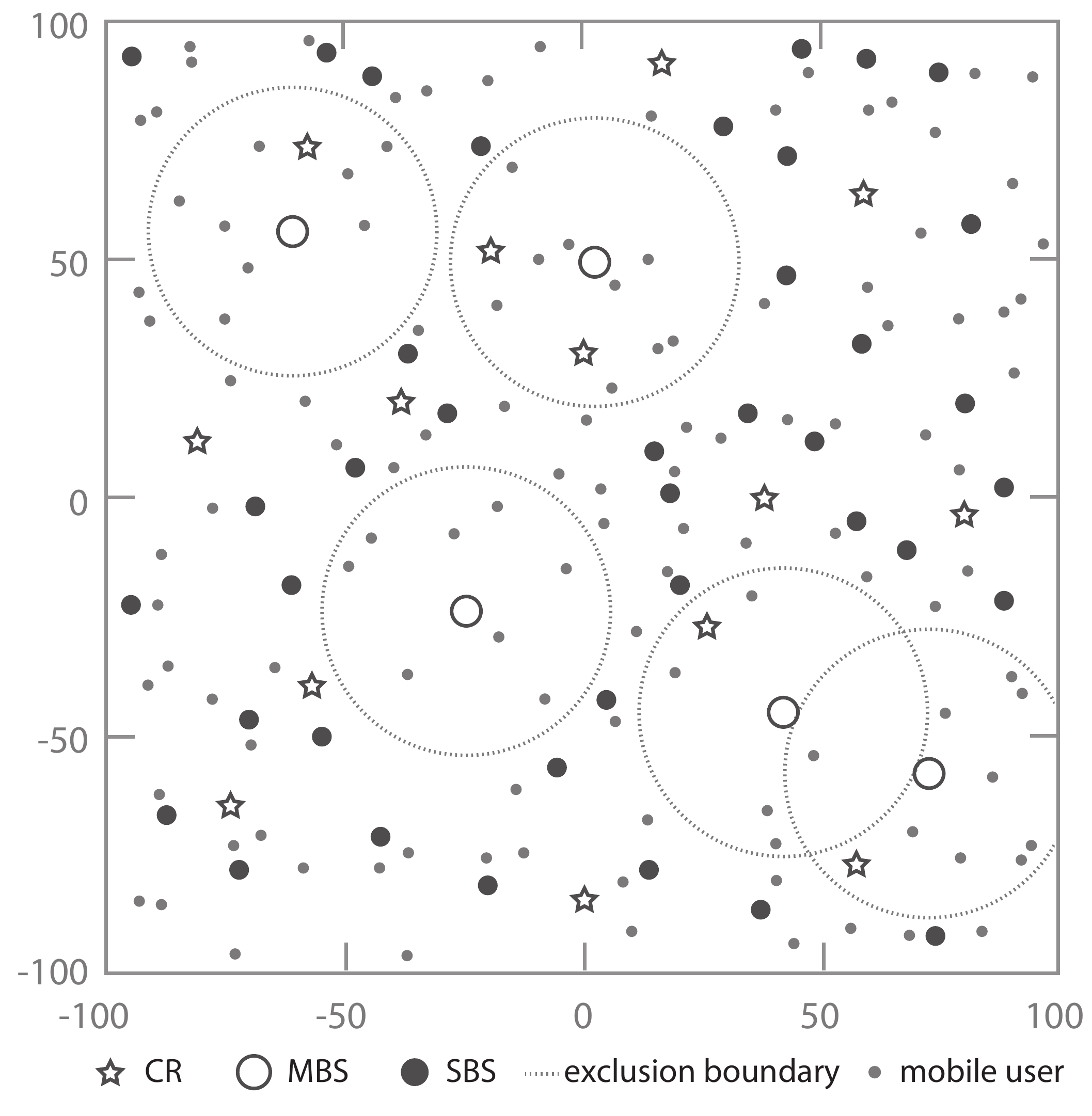}
	\caption{An illustration of the coverage-aided deployment.}
	\label{fig:scenario:cov}
	\vspace{-0.3cm}
\end{figure}
	On the other hand, \glspl{CR} are distributed in the plane according to an independent homogeneous \ac{PPP} of intensity $\lambda_{\mathrm{cr}}$, denoted by $\Phi_{\mathrm{cr}} = \{u_i\}_{i \in \mathbb{N}}$. These routers are in charge of providing broadband Internet connection to \glspl{MBS} and \glspl{SBS} via backhaul links. Mobile user terminals are also positioned in the whole plane according to an independent homogeneous \ac{PPP} of intensity $\lambda_{\mathrm{ut}}$, denoted by $\Phi_{\mathrm{ut}} = \{z_i\}_{i \in \mathbb{N}}$, with $\lambda_{\mathrm{ut}} > \lambda_{\mathrm{mc}}$ and $\lambda_{\mathrm{ut}} > \lambda_{\mathrm{sc}}$. The snapshots of these point processes are given in Fig. \ref{fig:scenario:cov}. 

\subsection{Capacity-Aided Topology}
\label{sec:topology2}
Suppose a two-tier heterogeneous cellular network consisting of \glspl{MBS} and cache-enabled \glspl{SBS}. The \glspl{MBS} are distributed according to an independent homogeneous \ac{PPP} of intensity $\lambda_{\mathrm{mc}}$, denoted by $\Phi_{\mathrm{mc}} = \{y_i\}_{i \in \mathbb{N}}$ where $y_i$ denotes the position of the $i$-th \ac{MBS}. The \glspl{SBS} are placed in hot-spots to sustain the demand of highly concentrated users, according to  an independent \emph{Matérn cluster process} $\Phi_{\mathrm{sc}} = \{x_i\}_{i \in \mathbb{N}}$ whose parent \ac{PPP} $\Phi_{\mathrm{sc'}}$ has intensity $\lambda_{\mathrm{sc'}}$. The process is given as follows \cite{Haenggi2012Stochastic}.
	\begin{definition}[Clustering process of capacity-aided \glspl{SBS}]\label{def:smallCellCapMatern}  
		Let $\Phi_{\mathrm{sc'}}$ be a parent process modelled by a homogeneous \ac{PPP} of intensity $\lambda_{\mathrm{sc'}}$. Then, the clustering process of \glspl{SBS} is given by
		\begin{equation}
			\Phi_{\mathrm{sc}} = \bigcup_{x' \in \Phi_{\mathrm{sc'}}} N^{x'}
		\end{equation}
		where $N^{x'}$ is a Poisson number of \ac{iid} points with mean $\bar{c}$, distributed uniformly in the ball $\mathcal{B}(x', R_{\mathrm{c}})$. Then, the process $\Phi_{\mathrm{sc}}$ is called  Matérn cluster process $\Phi_{\mathrm{sc}}$ and has intensity of 
		\begin{equation}
			\lambda_{\mathrm{sc}} = \lambda_{\mathrm{sc'}} \bar{c}.
		\end{equation}
	\end{definition}

\begin{figure}[ht!]
	\centering
	\includegraphics[width=0.70\linewidth]{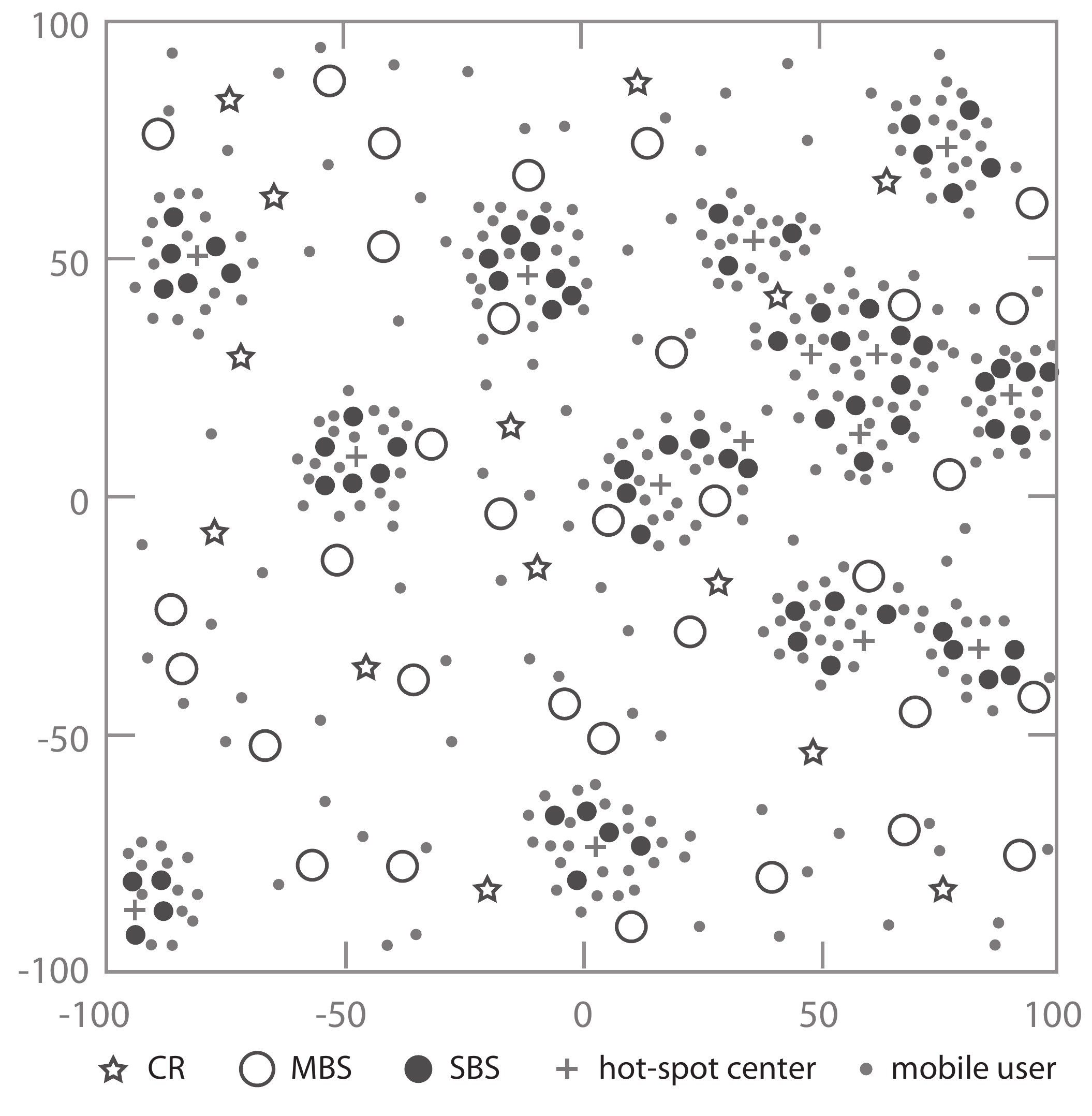}
	\caption{An illustration of the capacity-aided deployment.}
	\label{fig:scenario:cap}
	\vspace{-0.4cm}
\end{figure} 

For the users, we suppose that  mobile users (both \glspl{MU} and \glspl{SU}) are distributed on the two-dimensional Euclidean plane, however \glspl{SU} are highly concentrated in hot-spot regions served by \glspl{SBS}. From this motivation, we assume that all mobile users are distributed according to a Cox process $\Phi_{\mathrm{ut}} = \{z_i\}_{i \in \mathbb{N}}$ \cite{Haenggi2012Stochastic}. In particular, the population centers of radius $R_{\mathrm{c}}$ are drawn from the parent \ac{PPP} $\Phi_{\mathrm{sc'}}$, where \glspl{SU} in these clusters are distributed uniformly at random and are covered by \glspl{SBS} forming a Matern cluster process $\Phi_{\mathrm{sc}}$. By doing so, these mobile users are (on average) covered by their \glspl{SBS} deployed in these hot-spot areas.

Recalling $\bar{c}$ as the average number of \glspl{SBS} per cluster, the density of active \glspl{SU} per cluster is then $\lambda_{\mathrm{ut-s}} = \frac{\bar{c}}{\pi R_{\mathrm{c}}^2}$. The \glspl{MU} distributed in the rest of the network follow a \ac{PPP} with density $\lambda_{\mathrm{ut-m}}$ and are served by their own \glspl{MBS}. For convenience, we consider that each \ac{MBS} serves only one \ac{MU} on average and the same holds for each \ac{SBS} and its user. With this consideration in mind, the densities of \glspl{MU} and \glspl{SU} are equal to that of the \glspl{MBS} and \glspl{SBS}, respectively. Under this setting, the \glspl{MU} and \glspl{SU} form a Cox process with density $\lambda_{\mathrm{ut}} =  \lambda_{\mathrm{mc}} + \lambda_{\mathrm{sc}}$, clustered in hot-spots and uniformly distributed elsewhere. On the other hand, we consider that \glspl{CR} are modeled by an independent homogeneous \ac{PPP} of intensity $\lambda_{\mathrm{cr}}$, denoted by $\Phi_{\mathrm{cr}} = \{u_i\}_{i \in \mathbb{N}}$. A snapshot of the topology is depicted in Fig. \ref{fig:scenario:cap}. 
\subsection{Signal Model, Connectivity and Backhaul}
\label{sec:signalmodel}
The downlink transmissions of \glspl{MBS} and \glspl{SBS} occur at the same frequency with reuse factor $1$. The transmit power is $P_{\mathrm{mc}}$ for each \ac{MBS} and $P_{\mathrm{sc}}$ for each \ac{SBS}. All nodes (\glspl{MBS}, \glspl{SBS} and users) have single antenna. Having an \ac{MBS} positioned at $y$ and receiver at $z$ (or simply call as transmitter $y$ and user $z$), the channel coefficient is denoted by $h_{y,z} \in \mathbb{C}$. In case of  \ac{SBS} as a transmitter, the channel coefficient between transmitter $x$ and user $z$ is given by $g_{x,z} \in \mathbb{C}$. All the channel power coefficients are i.i.d. exponential random variables (Rayleigh fading) with $\mathbb{E}[|h_{y,z}|^2] = 1$ and $\mathbb{E}[|g_{x,z}|^2] = 1$. Supposing that the downlink rate of the typical user is a function of received \ac{SIR}, the target rate of signaling \ac{MBS} and \ac{SBS} are given by $\tau_{\mathrm{mc}}$ and $\tau_{\mathrm{sc}}$ respectively. 

Each user is either associated to the nearest \ac{MBS} or nearest \ac{SBS}. The backhaul connection of each base station is provided from its nearest \ac{CR}. Supposing that each \ac{CR} has a sufficiently high-capacity broadband Internet connection, \glspl{MBS} and \glspl{SBS} are connected to their nearest \glspl{CR} via error-free \emph{wired} backhaul links. In particular, each \ac{CR} has a total  capacity of $C_{\mathrm{cr}}$, which is an exponentially distributed random variable with mean $\mu$. 
\subsection{Caching}
\label{sec:caching}
We assume that the global content popularity distribution of users follow a power law defined as \cite{Newman2005Power}
\begin{equation}\label{eq:contentpdf}
f_{\mathrm{pop}}\left(f, \eta \right)
	=
	\begin{cases}
	\left(\eta - 1\right)f^{-\eta},
		& f \geq 1, \\
		0,			
		& f < 1,
	\end{cases}
\end{equation} 
where $f$ indicates a point in the support of the corresponding content and the parameter $\eta$ models the steepness of the distribution. 
In this work, we assume that  $f_{\mathrm{pop}}$ is perfectly known at the \glspl{SBS}. In $f_{\mathrm{pop}}\left(f,\eta\right)$, the contents in the interval $[1, F)$ are \emph{cacheable} and called as \emph{catalogue}, whereas the remaining part $[F, \infty]$ is called as non-cacheable contents (i.e., voice traffic, online gaming and sensor information). An interval $[f, f + \Delta f)$ in the support of $f_{\mathrm{pop}}\left(f,\eta\right)$ is considered as the probability of $f$-th content. Indeed, we assume that each content in the catalogue has a fixed length and called as \emph{chunk}. Each chunk can belong to a part of cacheable video file, audio or picture and so on. In fact, storing/distributing fixed-length chunks is one of the key principle in content centric networks \cite{Jacobson2009Networking}. Therefore, even though we use the term "content"  in the paper, the function $f_{\mathrm{pop}}\left(f,\eta\right)$  is actually a chunk popularity distribution. Each \ac{SBS} has a storage capacity of $F_{\mathrm{sc}}$ contents/chunks, with $1 \leq F_{\mathrm{sc}} \leq F$. 
\subsection{Hierarchical Model}
\label{sec:hierarchicalmodel}
The coverage and capacity-aided deployments of \glspl{SBS} together with \glspl{MBS}, users and \glspl{CR} can be modeled as random stationary graphs (hierarchical spanning trees) \cite{Suryaprakash2014Modeling}. In particular, a random hierarchical tree whose root node is a \ac{CR} located at $u$ is given by $\Psi = \{ (u, {\bf v}_u) \}$, where ${\bf v}_u$ is a mark vector containing all random variables associated with the \ac{CR}. In particular, the mark vector ${\bf v}_u$ contains information of \glspl{MBS} and \glspl{SBS} (with their users)  which are associated to the \ac{CR} at $u$, such as: 
\begin{itemize}
	\item $N_{\mathrm{mc}}$: The number of \glspl{MBS} connected to the \ac{CR} located at $u$. The vector ${\bf r}_{\mathrm{mc}} \in \mathbb{C}^{1 \times N_{\mathrm{mc}}}$ is the relative position vector of those \glspl{MBS} connected to the \ac{CR} at $u$. Therein, each element $r_{u,y}$ represents the distance from \ac{CR} at $u$ to \ac{MBS} at $y$. These positions are conditioned on $N_{\mathrm{mc}}$. 
	\item $N_{\mathrm{mu}}$: The number of users connected to an \ac{MBS} located at $y \in \{ {\bf r}_{\mathrm{mc}} \}$. The vector ${\bf r}_{\mathrm{mu}} \in \mathbb{C}^{1 \times N_{\mathrm{mu}}}$ is the relative positions of those \glspl{MU}  which are conditioned on  $N_{\mathrm{mu}}$. Each element $r_{y,z}$ represents distance from \ac{MBS} at $y$ to its user $z$.
	\item $N_{\mathrm{sc}}$: The number of \glspl{SBS} connected to the \ac{CR} located at $u$. The vector ${\bf r}_{\mathrm{sc}} \in \mathbb{C}^{1 \times N_{\mathrm{sc}}}$ is the relative positions of those \glspl{SBS} connected to the \ac{CR} at $u$. Here, each element $r_{u,x}$ represents the distance from \ac{CR} at $u$ to \ac{SBS} at $x$. These positions are conditioned on $N_{\mathrm{sc}}$.
	\item $N_{\mathrm{su}}$: The number of users connected to an \ac{SBS} located at $x \in \{ {\bf r}_{\mathrm{sc}} \}$. The vector ${\bf r}_{\mathrm{su}} \in \mathbb{C}^{1 \times N_{\mathrm{su}}}$ is the relative positions of users connected to the \ac{SBS} at $x \in \{ {\bf r}_{\mathrm{sc}} \}$. They are conditioned on  $N_{\mathrm{su}}$. Each element $r_{x,z}$ represents distance from \ac{SBS} at $x$ to its user at $z$.
\end{itemize}
\section{Performance Metrics}
\label{sec:performance}
For the performance metrics of coverage and capacity-aided deployments, we first start by defining \ac{SIR}.
\begin{definition}\label{def:sir}
The \ac{SIR} of a typical user connected to an \ac{MBS} (namely typical \ac{MU}) located at random position $y$ is defined as
\begin{align}
	\label{eq:sir:macro}
	\mathrm{SIR}_{\mathrm{mu}}
	&\triangleq
	\frac{ P_{\mathrm{mc}} h_{y} \ell(y) }{I_{\mathrm{mm}} + I_{\mathrm{sm}}}	
\end{align}
where $\ell(y) = \Vert y\Vert^{-\alpha}$ is the standard power-law path loss function (unless otherwise stated) with exponent $\alpha$, $I_{\mathrm{mm}} = \sum\limits_{y_{i} \in \Phi_{\mathrm{mc}} \setminus \{y\}} P_{\mathrm{mc}} h_{y_{i}} \ell(y_i)$ is the cumulative interference from other \glspl{MBS} except the serving cell at $y$, and $I_{\mathrm{sm}} = \sum\limits_{x_{i} \in \Phi_{\mathrm{sc}}} P_{\mathrm{sc}} g_{x_{i}} \ell(x_i)$ is the total interference from clustered \glspl{SBS}. Similarly, the \ac{SIR} of a typical user connected to an \ac{SBS} (namely typical \ac{SU}) located at random position $x$ is given by
\begin{align}
	\label{eq:sir:small}
	\mathrm{SIR}_{\mathrm{su}}
	&\triangleq
	\frac{ P_{\mathrm{mc}} g_{x} \ell(x) }{I_{\mathrm{ss}} + I_{\mathrm{ms}}}	
\end{align}
where $I_{\mathrm{ss}} = \sum\limits_{x_{i} \in \Phi_{\mathrm{sc}} \setminus \{ x \}} P_{\mathrm{sc}} g_{x_i} \ell(x_i)$ is the cumulative interference from other clustered \glspl{SBS} except the serving cell at $x$, and $I_{\mathrm{sm}} = \sum\limits_{y_{i} \in \Phi_{\mathrm{sc}}} P_{\mathrm{mc}} h_{y_i} \ell(y)$ is the total interference from \glspl{MBS}.
\end{definition}

The amount of backhaul rate allocated to typical users are defined by the following policy.
\begin{definition}[Backhaul Rate Splitting Policy]\label{def:backhaulpolicy}
Following the hierarchical model, suppose that a typical user located at $z \in \{ {\bf r}_{\mathrm{mu}} \}$  (typical \ac{MU}) is connected to the \ac{MBS} at $y \in \{ {\bf r}_{\mathrm{mc}} \}$, and this \ac{MBS} is connected to the nearest \ac{CR} at $u$. Then, the rate of backhaul link to the \ac{MBS} at $y$ is given as
\begin{equation}\label{eq:brate:macro}
	R'_{\mathrm{mu}} 	\triangleq \frac{\gamma C_{\mathrm{cr}} }
						  {\mathbb{E}\big[ N_{\mathrm{mc}} N_{\mathrm{mu}} \big]},
\end{equation} 
where $\gamma \in [0, 1]$ is a fraction of capacity allocated to the \glspl{MBS}. In case of \ac{SBS}, in a similar vein, a typical user located at $z \in \{ {\bf r}_{\mathrm{su}} \}$ is connected to the \ac{SBS} at $x \in \{ {\bf r}_{\mathrm{sc}} \}$ whose \ac{CR} is at $u$. Then, the rate of backhaul link to the \ac{SBS} is given as
\begin{equation}\label{eq:brate:small}
	R'_{\mathrm{su}} 	\triangleq \frac{(1 - \gamma) C_{\mathrm{cr}} }
						  {\mathbb{E}\big[ N_{\mathrm{sc}} N_{\mathrm{su}} \big]}.
\end{equation} 
\end{definition}

The expectation above is taken over the topology generated by the point processes of users and base stations. We now define our main performance metric as follows.
\begin{definition}[Delivery Rate]\label{def:deliveryrate}
The delivery rate of a typical user connected to an \ac{MBS} is defined as
\begin{align}
	\label{eq:deliveryrate:macro}
	R_{\mathrm{mu}}
	&\triangleq
	\begin{cases}
		\tau_{\mathrm{mc}}, 	& \text{if \;} 
											  	\mathrm{log}(1 + \mathrm{SIR}_{\mathrm{mu}}) > \tau_{\mathrm{mc}} 
											  	\text{\;and\;} 
											  	R'_{\mathrm{mu}} > \tau_{\mathrm{mc}}, \\											  	
		0,						& \text{otherwise}.
\end{cases}
\end{align}
Similarly, the delivery rate of a typical user connected to an \ac{SBS} is defined as
\begin{align}
	\label{eq:deliveryrate:small}
	R_{\mathrm{su}}
	&\triangleq
	\begin{cases}
		\tau_{\mathrm{sc}}, 	& \text{if \;} 
											  	\mathrm{log}(1 + \mathrm{SIR}_{\mathrm{su}}) > \tau_{\mathrm{sc}} 
											  	\text{\;and\;} 
											  	R'_{\mathrm{su}} > \tau_{\mathrm{sc}}, \\
		\tau_{\mathrm{sc}}, 	& \text{if \;} 
											  	\mathrm{log}(1 + \mathrm{SIR}_{\mathrm{su}}) > \tau_{\mathrm{sc}} 
											  	\text{\;and\;} 
											  	f_z \in \Delta_x, \\											
		0,						& \text{otherwise},
\end{cases}
\end{align}
where $f_z$ represents the content requested by the typical \ac{SU} and $\Delta_x $ is the cache of the \ac{SBS}.
\end{definition}

We are now ready to give the expressions for the average delivery rate of typical \glspl{MU} and \glspl{SU}.
\section{Main Results}
\begin{theorem}[Average Delivery Rate of Typical \ac{MU}]\label{theor:avgrate:macro}
The average delivery rate of a typical user connected to the nearest \ac{MBS} cell in coverage-aided deployment is approximated as
\begin{align}
	\bar{R}_{\mathrm{mu}}^{\mathrm{(cov)}} \approx 	
	\tau_{\mathrm{mc}}
	B^{\mathrm{(cov)}}_1
	B^{\mathrm{(cov)}}_2
\end{align}
where $B^{\mathrm{(cov)}}_1$ and $B^{\mathrm{(cov)}}_2$ are given in \eqref{eq:avgRateB1Cov} and \eqref{eq:avgRateB2Cov} respectively. Therein, Laplace transforms and other related function definitions are given below $B^{\mathrm{(cov)}}_1$, and $F(x,y;z;w)$ is the hypergeometric function. For capacity-aided deployment, we have
\begin{align}
	\bar{R}_{\mathrm{mu}}^{\mathrm{(cap)}} \approx
	\tau_{\mathrm{mc}}  
	B^{\mathrm{(cap)}}_1
	B^{\mathrm{(cap)}}_2
\end{align}
where $B^{\mathrm{(cap)}}_1$ and $B^{\mathrm{(cap)}}_2$ are given in \eqref{eq:avgRateB1Cap} and \eqref{eq:avgRateB2Cap} respectively. 
\end{theorem}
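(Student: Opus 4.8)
The plan is to start from Definition~\ref{def:deliveryrate}. Since $R_{\mathrm{mu}}$ takes only the two values $\tau_{\mathrm{mc}}$ and $0$, its expectation is
\begin{equation*}
\bar{R}_{\mathrm{mu}} = \tau_{\mathrm{mc}}\,\mathbb{P}\!\left[\log(1+\mathrm{SIR}_{\mathrm{mu}}) > \tau_{\mathrm{mc}} \ \text{and}\ R'_{\mathrm{mu}} > \tau_{\mathrm{mc}}\right].
\end{equation*}
The two events are driven by disjoint sources of randomness: the rate-coverage event depends only on the fading and on the interfering base-station positions through $\mathrm{SIR}_{\mathrm{mu}}$ of \eqref{eq:sir:macro}, whereas by \eqref{eq:brate:macro} the backhaul event depends only on the exponential capacity $C_{\mathrm{cr}}$, the denominator $\mathbb{E}[N_{\mathrm{mc}}N_{\mathrm{mu}}]$ being a deterministic constant. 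Treating the radio layer and the backhaul layer as independent, the joint probability factorises into the product $\tau_{\mathrm{mc}}B_1B_2$ asserted by the theorem, with $B_1$ the \ac{SIR} coverage probability and $B_2$ the backhaul success probability. The same factorisation is used verbatim for both deployments, so only $B_1$ changes between the coverage- and capacity-aided cases.

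For $B_1$ I would condition on the serving distance $r=\lVert y\rVert$ to the nearest \ac{MBS}, which under nearest-\ac{MBS} association has the density $2\pi\lambda_{\mathrm{mc}}r\,e^{-\pi\lambda_{\mathrm{mc}}r^{2}}$. Let $\theta$ be the \ac{SIR} threshold obtained by inverting the rate constraint $\log(1+\mathrm{SIR}_{\mathrm{mu}})>\tau_{\mathrm{mc}}$. Because the desired channel power $h_{y}$ is unit-mean exponential, the event $h_{y}>\theta r^{\alpha}(I_{\mathrm{mm}}+I_{\mathrm{sm}})/P_{\mathrm{mc}}$ integrates out to a product of Laplace transforms, giving
\begin{equation*}
B_1 = \mathbb{E}_{r}\!\left[\mathcal{L}_{I_{\mathrm{mm}}}(s)\,\mathcal{L}_{I_{\mathrm{sm}}}(s)\right],\qquad s=\frac{\theta r^{\alpha}}{P_{\mathrm{mc}}}.
\end{equation*}
The inter-\ac{MBS} interference $I_{\mathrm{mm}}$ originates from a \ac{PPP} with its points excluded inside the serving ball of radius $r$, so $\mathcal{L}_{I_{\mathrm{mm}}}$ follows from the \ac{PGFL} of a \ac{PPP}; evaluating its finite-limit path-loss integral in closed form is precisely what produces the Gauss hypergeometric factor $F(x,y;z;w)$ cited in the statement. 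Averaging the product over the serving-distance density then leaves the single finite integral reported for $B_1$.

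The delicate term is $\mathcal{L}_{I_{\mathrm{sm}}}$, the Laplace transform of the interference from the clustered \glspl{SBS}. In the coverage-aided deployment $\Phi_{\mathrm{sc}}$ is a Poisson hole process whose \ac{PGFL} has no closed form, since the holes carved by the \ac{MBS} exclusion disks of Definition~\ref{def:smallCellCovPhole} are mutually correlated; here I would approximate the hole process by a \ac{PPP} of the reduced intensity $\lambda_{\mathrm{sc}}=\lambda_{\mathrm{sc'}}e^{-\lambda_{\mathrm{mc}}\pi R_{\mathrm{c}}^{2}}$ while retaining the exclusion disk of radius $R_{\mathrm{c}}$ around the serving \ac{MBS} (in which no \ac{SBS} lies), which turns $\mathcal{L}_{I_{\mathrm{sm}}}$ into a finite-limit \ac{PPP} integral and, alongside $\mathcal{L}_{I_{\mathrm{mm}}}$, contributes the hypergeometric factors. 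In the capacity-aided deployment $\Phi_{\mathrm{sc}}$ is a Matérn cluster process, and $\mathcal{L}_{I_{\mathrm{sm}}}$ instead follows from the \ac{PGFL} of a Poisson cluster process: inserting $v(x)=1/(1+sP_{\mathrm{sc}}\lVert x\rVert^{-\alpha})$ reduces it to a double integral over the parent locations and the uniform offspring law on $\mathcal{B}(x',R_{\mathrm{c}})$, the inner offspring average being carried out first. This point-process step is the main obstacle: the intractability of the hole-process \ac{PGFL} on one side and the nested cluster integral on the other are exactly what force the approximations and the finite-integral representation rather than exact closed forms.

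Finally, $B_2=\mathbb{P}[R'_{\mathrm{mu}}>\tau_{\mathrm{mc}}]$ is exact. Substituting \eqref{eq:brate:macro} and using the exponential tail $\mathbb{P}[C_{\mathrm{cr}}>c]=e^{-c/\mu}$ gives
\begin{equation*}
B_2 = \exp\!\left(-\frac{\tau_{\mathrm{mc}}\,\mathbb{E}[N_{\mathrm{mc}}N_{\mathrm{mu}}]}{\gamma\mu}\right),
\end{equation*}
where the load moment $\mathbb{E}[N_{\mathrm{mc}}N_{\mathrm{mu}}]$ is read off the hierarchical tree as the mean \ac{MU} load per \ac{CR}, namely the average number $\lambda_{\mathrm{mc}}/\lambda_{\mathrm{cr}}$ of \glspl{MBS} per \ac{CR} Voronoi cell times the average number of \glspl{MU} per \ac{MBS} (one, by the single-user-per-cell convention). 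Combining $B_1$ and $B_2$ yields both claimed expressions, the capacity-aided form differing only through the Matérn version of $\mathcal{L}_{I_{\mathrm{sm}}}$.
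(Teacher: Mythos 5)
Your overall architecture agrees with what the theorem's structure (and the remark following it) implies, and with what the deferred proof in Appendix C.1 of \cite{Bastug2015Distributed} must do: write $\bar{R}_{\mathrm{mu}}=\tau_{\mathrm{mc}}\,\mathbb{P}[\,\mathrm{coverage}\cap\mathrm{backhaul}\,]$ from Definition~\ref{def:deliveryrate}, decouple the radio and backhaul randomness so the probability factorizes as $B_1 B_2$, obtain $B_1$ by conditioning on the serving distance and using the unit-mean exponential fading to turn the coverage event into a product $\mathcal{L}_{I_{\mathrm{mm}}}(s)\mathcal{L}_{I_{\mathrm{sm}}}(s)$, evaluate $\mathcal{L}_{I_{\mathrm{mm}}}$ by the PPP \ac{PGFL} (which is indeed the source of the hypergeometric factor in \eqref{eq:avgRateLmmCov}), and treat the clustered interference via an approximation of the Poisson hole process in the coverage-aided case and via the exact Neyman--Scott \ac{PGFL} in the capacity-aided case; your description of the latter matches \eqref{eq:avgRateLsmCap} precisely, including the inner offspring average $\nu(s,y)$ against the uniform density $f$.

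However, three concrete deviations prevent your argument from arriving at the formulas the theorem actually asserts. (i) In \eqref{eq:avgRateLsmCov} the paper keeps the \emph{parent} intensity $\lambda_{\mathrm{sc'}}$ over the whole plane and subtracts only the hole of the serving \ac{MBS} (the term $\pi R_{\mathrm{c}}^2 A_{\mathrm{mc}}(s,R_{\mathrm{c}})$) --- the dominant-hole approximation of the PHP; you instead thin to $\lambda_{\mathrm{sc}}=\lambda_{\mathrm{sc'}}e^{-\lambda_{\mathrm{mc}}\pi R_{\mathrm{c}}^2}$ \emph{and} carve the serving hole, which discounts the hole structure twice and yields a different Laplace transform. (ii) The integrals \eqref{eq:avgRateB1Cov} and \eqref{eq:avgRateB1Cap} average over a Weibull density $\frac{k}{\nu}(r/\nu)^{k-1}e^{-(r/\nu)^k}$ with free parameters $(k,\nu)$, truncated to $[0,R_{\mathrm{c}}]$ --- i.e.\ a fitted approximation of the serving-distance law, which is what makes $B_1$ a \emph{finite} integral; your Rayleigh density $2\pi\lambda_{\mathrm{mc}}r\,e^{-\pi\lambda_{\mathrm{mc}}r^2}$ on $[0,\infty)$ is only the special case $k=2$, $\nu=(\pi\lambda_{\mathrm{mc}})^{-1/2}$ without the truncation. (iii) Most substantively, your backhaul term $B_2=\exp\!\big(-\tau_{\mathrm{mc}}\mathbb{E}[N_{\mathrm{mc}}N_{\mathrm{mu}}]/(\gamma\mu)\big)$ does not match \eqref{eq:avgRateB2Cov} and \eqref{eq:avgRateB2Cap}, which have the complementary form $1-\exp(-\cdot)$ and an exponent proportional to $\lambda_{\mathrm{cr}}$ and inversely proportional to $\lambda_{\mathrm{mc}}^2\lambda_{\mathrm{ut}}$ (resp.\ $\lambda_{\mathrm{ut-m}}$), i.e.\ scaling like the \emph{reciprocal} of your load moment rather than the load itself; relatedly, your claim that only $B_1$ changes between deployments is contradicted by the paper, since $B^{\mathrm{(cov)}}_2\neq B^{\mathrm{(cap)}}_2$ because $\mathbb{E}[N_{\mathrm{mc}}N_{\mathrm{mu}}]$ depends on the user/base-station topology of Section~\ref{sec:hierarchicalmodel}. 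Your exponential-tail computation is the natural reading of Definition~\ref{def:backhaulpolicy} (and has the intuitive monotonicity in $\tau_{\mathrm{mc}}$ and $\gamma$, which the stated $B_2$ does not), but since the paper's proof lives in the external appendix, you cannot simply assert your form: to prove the theorem \emph{as stated} you would need to recover the paper's load-moment computation over the hierarchical tree, or else explicitly flag and justify the discrepancy.
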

\begin{proof}
See Appendix C.1 in \cite{Bastug2015Distributed}.
\end{proof}
\begin{remark}
These expressions are cumbersome but numerically easy to compute. The terms $B_1$ and $B_2$ capture the downlink and backhaul behaviour respectively.
\end{remark}
\vspace{-0.3cm}
%
{\small
\begin{multline}
	B^{\mathrm{(cov)}}_1 
	= 
	\int_0^{R_{\mathrm{c}}}
		 e^{-\frac{( e^{\tau_{\mathrm{mc}}} - 1 )}{P_{\mathrm{mc}} r_{\mathrm{mc}}^{-\alpha}}}
		\mathcal{L}_{I_{\mathrm{mm}}}
		\Big(
			\frac{e^{\tau_{\mathrm{mc}}} - 1}{P_{\mathrm{mc}} r_{\mathrm{mc}}^{-\alpha}}
		\Big) \times \\
		\mathcal{L}_{I_{\mathrm{sm}}}
		\Big(
			\frac{e^{\tau_{\mathrm{mc}}} - 1}{P_{\mathrm{mc}} r_{\mathrm{mc}}^{-\alpha}}
		\Big)
		\frac{k}{\nu} \big( \frac{r_{\mathrm{mc}}}{\nu}\big)^{k - 1} e^{-(r_{\mathrm{mc}}/\nu)^k}
		\mathrm{d}r_{\mathrm{mc}} \label{eq:avgRateB1Cov} 
\end{multline}
\vspace{-0.5cm}
\begin{multline}
	\mathcal{L}_{I_{\mathrm{mm}}}(s) 
	=
	\mathrm{exp}
	\Big(
		\frac{-s \pi \lambda_{\mathrm{mc}} P_{\mathrm{mc}} (2/\alpha)}{1 - 2/\alpha}
		r_{\mathrm{mc}}^{2 - \alpha} \times \\
		F
		\big(
		1,
		1 - 2/\alpha;
		2 - 2/\alpha;
		-s P_{\mathrm{mc}} r_{\mathrm{mc}}^{-\alpha}
		\big)
	\Big) \label{eq:avgRateLmmCov}
\end{multline}
\vspace{-0.4cm}
\begin{multline}
	\mathcal{L}_{I_{\mathrm{sm}}}(s)
	= 
	\mathrm{exp}
	\Big\{
		-\lambda_{\mathrm{sc}'}
		\Big(
			\frac{
				(s P_{\mathrm{sc}})^{2/\alpha}
				\pi^2
				(2/\alpha)
			}{\mathrm{sin}(\pi \frac{2}{\alpha})}
			- \\
			\pi
			R_{\mathrm{c}}^2
			A_{\mathrm{mc}}(s,R_{\mathrm{c}})
		\Big)
	\Big\}	\label{eq:avgRateLsmCov} 
\end{multline}
\vspace{-0.4cm}
\begin{multline}
	A_{\mathrm{mc}}(s,R_{\mathrm{c}})
	=
	\frac{1}{\pi R_{\mathrm{c}}^2}
	\int_0^{2\pi}
	\int_0^{r_{\mathrm{mc}} \mathrm{cos}	\varphi + \sqrt{ R_{\mathrm{c}}^2 - r_{\mathrm{mc}}^2 \mathrm{sin}^{2}\varphi}} \\
	{
		\frac{r \mathrm{d}r \mathrm{d} \varphi}
		{
			1 + s^{-1} P_{\mathrm{sc}}^{-1} r^{\alpha}		
		}
	}	\label{eq:avgRateAmcCov}
\end{multline}
\vspace{-0.4cm}
\begin{multline}
	B^{\mathrm{(cov)}}_2 
	= 
	1 -  \mathrm{exp}
		\Big(
			-\frac{
				\tau_{\mathrm{mc}}
				\lambda_{\mathrm{cr}} 
				\big(\lambda_{\mathrm{mc}} + \lambda_{\mathrm{sc}'} \mathrm{exp}(- \lambda_{\mathrm{mc}} \pi R_{\mathrm{c}}^2) \big)
			}{
				\mu 
				\gamma 
				\lambda_{\mathrm{mc}}^2
				\lambda_{\mathrm{ut}}
			} 
		\Big) \label{eq:avgRateB2Cov} 
\end{multline}
\vspace{-0.4cm}
}

{\small
\begin{multline}
	B^{\mathrm{(cap)}}_1 = 
	\int_0^{R_{\mathrm{c}}}
		 e^{-\frac{( e^{\tau_{\mathrm{mc}}} - 1 )}{P_{\mathrm{mc}} r_{\mathrm{mc}}^{-\alpha}}}
		\mathcal{L}_{I_{\mathrm{mm}}}
		\Big(
			\frac{e^{\tau_{\mathrm{mc}}} - 1}{P_{\mathrm{mc}} r_{\mathrm{mc}}^{-\alpha}}
		\Big) \times \\
		\mathcal{L}_{I_{\mathrm{sm}}}
		\Big(
			\frac{e^{\tau_{\mathrm{mc}}} - 1}{P_{\mathrm{mc}} r_{\mathrm{mc}}^{-\alpha}}
		\Big)
		\frac{k}{\nu} \big( \frac{r_{\mathrm{mc}}}{\nu}\big)^{k - 1} e^{-(r_{\mathrm{mc}}/\nu)^k}
		\mathrm{d}r_{\mathrm{mc}} \label{eq:avgRateB1Cap}
\end{multline}
\vspace{-0.4cm}
\begin{multline}
	\mathcal{L}_{I_{\mathrm{mm}}}(s)
	=
	\mathrm{exp}
	\Big(
		\frac{-s \pi \lambda_{\mathrm{mc}} P_{\mathrm{mc}} (2/\alpha)}{1 - 2/\alpha}
		r_{\mathrm{mc}}^{2 - \alpha} \times \\
		F
		\big(
		1,
		1 - 2/\alpha;
		2 - 2/\alpha;
		-s P_{\mathrm{mc}} r_{\mathrm{mc}}^{-\alpha}
		\big)
	\Big)	\label{eq:avgRateLmmCap}
\end{multline}
\vspace{-0.4cm}
\begin{multline}
	\mathcal{L}_{I_{\mathrm{sm}}}(s)
	=
	\mathrm{exp}
	\Big(
		- \lambda_{\mathrm{sc}'}
		\int_{\mathbb{R}^2}
		\Big(
			1 - 
			\mathrm{exp}(
				-\bar{c} \nu(s, y)
			)
		\Big)
		\mathrm{d}
		y
	\Big) \label{eq:avgRateLsmCap}
\end{multline}
\vspace{-0.4cm}
\begin{equation}
	\nu(s, y) 
	= 
	\int_{\mathbb{R}^2}{\frac{f(x)}{1 + (s P_{\mathrm{sc}} \ell(x - y) )^{-1}} \mathrm{d}x}
\end{equation}
\vspace{-0.1cm}
\begin{equation}
	f(x) 
	=
	\begin{cases}
		\frac{1}{\pi R_{\mathrm{c}}^2}, 	& \text{if \;} 
											  	\Vert x \Vert < R_{\mathrm{c}}, \\				  	
		0,						& \text{otherwise}.
	\end{cases}
\end{equation}
\vspace{-0.1cm}
\begin{equation}	
	B^{\mathrm{(cap)}}_2 
	= 
	1 - \mathrm{exp}
		\Big(
			-\frac{
				\tau_{\mathrm{mc}}
				\lambda_{\mathrm{cr}}
			}{
				\mu 
				\gamma 
				\lambda_{\mathrm{ut-m}}
			} 
		\Big) \label{eq:avgRateB2Cap} 
\end{equation}
\vspace{-0.4cm}	
}

\begin{theorem}[Average Delivery Rate of Typical \ac{SU}]\label{theor:avgrate:small}
The average delivery rate of the typical user connected to the nearest \ac{SBS} in coverage-aided deployment is approximated as
\begin{align}
	\bar{R}_{\mathrm{su}}^{\mathrm{(cov)}} \approx  
	\tau_{\mathrm{sc}}
	C^{\mathrm{(cov)}}_1
	C^{\mathrm{(cov)}}_2
	+
	\tau_{\mathrm{sc}}
	C^{\mathrm{(cov)}}_1
	C^{\mathrm{(cov)}}_3 \\
	- 
	\tau_{\mathrm{sc}}
	C^{\mathrm{(cov)}}_1
	C^{\mathrm{(cov)}}_2
	C^{\mathrm{(cov)}}_3
\end{align}
where $C^{\mathrm{(cov)}}_1$, $C^{\mathrm{(cov)}}_2$ and $C^{\mathrm{(cov)}}_3$ are given in \eqref{eq:avgRateC1Cov}, \eqref{eq:avgRateC2Cov} and \eqref{eq:avgRateC3Cov} respectively. Therein, Laplace transforms and other related function definitions are given below $C^{\mathrm{(cov)}}_1$. For capacity-aided deployment, we have
\begin{align}
	\bar{R}_{\mathrm{su}}^{\mathrm{(cap)}} \approx  
	\tau_{\mathrm{sc}}
	C^{\mathrm{(cap)}}_1
	C^{\mathrm{(cap)}}_2
	+
	\tau_{\mathrm{sc}}
	C^{\mathrm{(cap)}}_1
	C^{\mathrm{(cap)}}_3 \\ 
	-
	\tau_{\mathrm{sc}}
	C^{\mathrm{(cap)}}_1
	C^{\mathrm{(cap)}}_2
	C^{\mathrm{(cap)}}_3
\end{align}
where $C^{\mathrm{(cap)}}_1$, $C^{\mathrm{(cap)}}_2$ and $C^{\mathrm{(cap)}}_3$ are given in \eqref{eq:avgRateC1Cap}, \eqref{eq:avgRateC2Cap} and \eqref{eq:avgRateC3Cap} respectively. 
\end{theorem}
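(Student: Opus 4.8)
\emph{The plan} is to observe that, by Definition~\ref{def:deliveryrate}, the delivery rate of a typical \ac{SU} takes only the values $\tau_{\mathrm{sc}}$ and $0$, so the average rate is simply $\tau_{\mathrm{sc}}$ times the probability of the successful-delivery event. Reading off \eqref{eq:deliveryrate:small}, this event is
\[
\underbrace{\{\log(1+\mathrm{SIR}_{\mathrm{su}})>\tau_{\mathrm{sc}}\}}_{A}\;\cap\;\Big(\underbrace{\{R'_{\mathrm{su}}>\tau_{\mathrm{sc}}\}}_{B}\;\cup\;\underbrace{\{f_z\in\Delta_x\}}_{C}\Big).
\]
The events $A$, $B$, $C$ are driven respectively by the wireless channel and interference field, by the backhaul capacity and the \ac{CR} spanning tree, and by the requested content relative to the cache; since these are independent sources of randomness in the model, I would treat $A$, $B$, $C$ as mutually independent. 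Writing $C_1=\mathbb{P}[A]$, $C_2=\mathbb{P}[B]$, $C_3=\mathbb{P}[C]$ and applying inclusion--exclusion to the union, $\mathbb{P}[A\cap(B\cup C)]=C_1(C_2+C_3-C_2C_3)$, which multiplied by $\tau_{\mathrm{sc}}$ reproduces exactly the claimed three-term expression. The same decomposition applies verbatim to both deployments, the only difference being the point process entering $C_1$ and the load entering $C_2$.

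The bulk of the work is the coverage factor $C_1=\mathbb{P}[\mathrm{SIR}_{\mathrm{su}}>e^{\tau_{\mathrm{sc}}}-1]$. Following the standard Rayleigh-fading argument, I would condition on the distance $r_{\mathrm{sc}}$ from the typical \ac{SU} to its serving (nearest) \ac{SBS} and exploit the fact that the desired gain $g_x$ is exponential to turn the coverage event into a product of Laplace transforms of the aggregate interference,
\[
C_1=\int \mathcal{L}_{I_{\mathrm{ss}}}(s)\,\mathcal{L}_{I_{\mathrm{ms}}}(s)\,f_{R_{\mathrm{sc}}}(r_{\mathrm{sc}})\,\mathrm{d}r_{\mathrm{sc}},\qquad s=\frac{e^{\tau_{\mathrm{sc}}}-1}{P_{\mathrm{sc}}\,r_{\mathrm{sc}}^{-\alpha}},
\]
where $I_{\mathrm{ss}}$ and $I_{\mathrm{ms}}$ are independent because they originate from different tiers. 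The inter-tier term $\mathcal{L}_{I_{\mathrm{ms}}}$ is over the \ac{MBS} \ac{PPP} and is handled by the usual \ac{PGFL} of a \ac{PPP}, exactly as in Theorem~\ref{theor:avgrate:macro}, while the serving-distance law $f_{R_{\mathrm{sc}}}$ would be approximated by the Weibull form already used in \eqref{eq:avgRateB1Cov} to reflect the clustered geometry.

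The remaining factors are comparatively light. For the cache-hit probability $C_3$, assuming each \ac{SBS} stores the $F_{\mathrm{sc}}$ most popular chunks, I would integrate the popularity law \eqref{eq:contentpdf} over the cached support, obtaining the closed form $C_3=1-F_{\mathrm{sc}}^{1-\eta}$. For the backhaul factor $C_2$, I would marginalize over the exponentially distributed \ac{CR} capacity $C_{\mathrm{cr}}$ and substitute the expected per-\ac{CR} load $\mathbb{E}[N_{\mathrm{sc}}N_{\mathrm{su}}]$ computed from the density ratios of the hierarchical spanning-tree model of Section~\ref{sec:hierarchicalmodel}; this yields the same closed form as $B_2$ in Theorem~\ref{theor:avgrate:macro}, with $\gamma$ replaced by $1-\gamma$ and the \ac{MBS} load replaced by the \ac{SBS} load.

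\emph{The main obstacle} is the intra-tier term $\mathcal{L}_{I_{\mathrm{ss}}}$, since the serving \ac{SBS} belongs to a non-Poisson clustered process. In the capacity-aided (Matérn cluster) case I would evaluate it through the \ac{PGFL} of the Matérn cluster process, carefully separating interferers lying in the \emph{same} cluster as the serving \ac{SBS} (correlated with the serving link through the common cluster centre) from those in other clusters; this is the origin of the finite-integral form \eqref{eq:avgRateLsmCap} with kernel $\nu(s,y)$. In the coverage-aided (Poisson hole) case the difficulty is the exclusion regions: the hole process is a Cox process whose \ac{PGFL} has no closed form, so I would approximate the field by retaining the first-order ball-removal correction $A_{\mathrm{mc}}(s,R_{\mathrm{c}})$ as in \eqref{eq:avgRateLsmCov}. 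These approximations are precisely why the theorem is stated with $\approx$ rather than equality.
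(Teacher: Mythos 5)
Your skeleton is the right one and matches what the theorem's structure forces: $R_{\mathrm{su}}\in\{\tau_{\mathrm{sc}},0\}$, success event $A\cap(B\cup C)$, independence of the three randomness sources plus inclusion--exclusion giving $\tau_{\mathrm{sc}}C_1(C_2+C_3-C_2C_3)$, and $C_1$ obtained by conditioning on the serving distance, exploiting the exponential desired gain to factor into Laplace transforms, with the Weibull-type serving-distance density of \eqref{eq:avgRateB1Cov}. (The paper itself only cites an external appendix for the proof, so the comparison here is against the stated factor formulas.) However, your factor computations diverge from those formulas in three concrete places. First, the cache-hit probability has an off-by-one: storing the $F_{\mathrm{sc}}$ most popular chunks covers the support $[1,1+F_{\mathrm{sc}})$, so $C_3=\int_1^{1+F_{\mathrm{sc}}}(\eta-1)f^{-\eta}\,\mathrm{d}f=1-(1+F_{\mathrm{sc}})^{1-\eta}$ as in \eqref{eq:avgRateC3Cov}, not $1-F_{\mathrm{sc}}^{1-\eta}$. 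Second, in the coverage-aided case you place the Poisson-hole correction in the intra-tier term $\mathcal{L}_{I_{\mathrm{ss}}}$, but the paper does the opposite: \eqref{eq:avgRateC1LssCov} approximates the SBS interference by a plain PPP of intensity $\lambda_{\mathrm{sc}'}$ with no hole correction, while the ball-removal term $A_{\mathrm{sc}}(s,R_{\mathrm{c}})$ appears in the \emph{macro}-tier transform \eqref{eq:avgRateC1LmsCov}. This placement is not cosmetic: the typical SU sits near an SBS that lies at distance at least $R_{\mathrm{c}}$ from every MBS, so it is the macro interferers that are (approximately) excluded from a ball around the receiver, whereas nearby SBSs are not excluded at all. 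Your capacity-aided split of $\mathcal{L}_{I_{\mathrm{ss}}}$ into same-cluster and other-cluster contributions does match \eqref{eq:avgRateLssCap}.

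Third, the backhaul factor: your route — marginalize the exponential $C_{\mathrm{cr}}$ via its survival function — yields $\mathbb{P}[C_{\mathrm{cr}}>\tau_{\mathrm{sc}}\,\mathbb{E}[N_{\mathrm{sc}}N_{\mathrm{su}}]/(1-\gamma)]=\exp\bigl(-\tau_{\mathrm{sc}}\mathbb{E}[N_{\mathrm{sc}}N_{\mathrm{su}}]/((1-\gamma)\mu)\bigr)$, a factor of the form $\exp(-\cdot)$, whereas the stated \eqref{eq:avgRateC2Cov} and \eqref{eq:avgRateC2Cap} are of the complementary form $1-\exp(-\cdot)$ and moreover retain $\gamma$ rather than $1-\gamma$ (and \eqref{eq:avgRateC2Cap} carries $\tau_{\mathrm{mc}}$). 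You assert your computation "yields the same closed form as $B_2$" without deriving it; as described it cannot, since $\exp(-x)\neq 1-\exp(-x)$. To close this you would need the explicit evaluation of $\mathbb{E}[N_{\mathrm{sc}}N_{\mathrm{su}}]$ in terms of the densities $\lambda_{\mathrm{cr}},\lambda_{\mathrm{sc}},\lambda_{\mathrm{ut}}$ from the spanning-tree model together with a justification of the complementary form — or an honest flag that your derivation and the stated $C_2$ disagree. A last, minor point: your $C_1$ integrand omits the factor $e^{-(e^{\tau_{\mathrm{sc}}}-1)/(P_{\mathrm{sc}}r_{\mathrm{sc}}^{-\alpha})}$ present in \eqref{eq:avgRateC1Cov}; this is a unit-power noise term, i.e., the stated $C_1$ is effectively an SINR-based coverage probability even though Definition~\ref{def:sir} is written as SIR, and a faithful reconstruction of the theorem must carry it.
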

\begin{proof}
See Appendix C.2 in \cite{Bastug2015Distributed}.
\end{proof}
\begin{remark}
The terms $C_1$, $C_2$ and $C_3$ incorporate the downlink,  backhaul and caching aspects respectively.
\end{remark}
%
{\small
\begin{multline}
	C^{\mathrm{(cov)}}_1 =
	\int_0^{R_{\mathrm{c}}}
		 e^{-\frac{( e^{\tau_{\mathrm{sc}}} - 1 )}{P_{\mathrm{sc}} r_{\mathrm{sc}}^{-\alpha}}}
		\mathcal{L}_{I_{\mathrm{ss}}}
		\Big(
			\frac{e^{\tau_{\mathrm{sc}}} - 1}{P_{\mathrm{sc}} r_{\mathrm{sc}}^{-\alpha}}
		\Big) \times \\
		\mathcal{L}_{I_{\mathrm{ms}}}
		\Big(
			\frac{e^{\tau_{\mathrm{sc}}} - 1}{P_{\mathrm{sc}} r_{\mathrm{sc}}^{-\alpha}}
		\Big)
		\frac{k}{\nu} \big( \frac{r_{\mathrm{sc}}}{\nu}\big)^{k - 1} e^{-(r_{\mathrm{sc}}/\nu)^k}
		\mathrm{d}r_{\mathrm{sc}} \label{eq:avgRateC1Cov}
\end{multline}
\vspace{-0.4cm}	
\begin{multline}
	\mathcal{L}_{I_{\mathrm{ss}}}(s)
	=
	\mathrm{exp}
	\Big(
		\frac{-s \pi \lambda_{\mathrm{sc}'} P_{\mathrm{sc}} (2/\alpha)}{1 - 2/\alpha}
		r_{\mathrm{sc}}^{2 - \alpha} \times \\
		F
		\big(
		1,
		1 - 2/\alpha;
		2 - 2/\alpha;
		-s P_{\mathrm{sc}} r_{\mathrm{sc}}^{-\alpha}
		\big)
	\Big)  \label{eq:avgRateC1LssCov}
\end{multline}
\vspace{-0.4cm}	
\begin{multline}
	\mathcal{L}_{I_{\mathrm{ms}}} (s)
	= 
	\mathrm{exp}
	\Big\{
		-\lambda_{\mathrm{mc}}
		\Big(
			\frac{
				(s P_{\mathrm{mc}})^{2/\alpha}
				\pi^2
				(2/\alpha)
			}{\mathrm{sin}(\pi \frac{2}{\alpha})}
			-  \\
			\pi
			R_{\mathrm{c}}^2
			A_{\mathrm{sc}}(s,R_{\mathrm{c}})
		\Big)
	\Big\}	\label{eq:avgRateC1LmsCov}
\end{multline}
\vspace{-0.4cm}	
\begin{multline}
	A_{\mathrm{sc}}(s,R_{\mathrm{c}})
	=
	\frac{1}{\pi R_{\mathrm{c}}^2}
	\int_0^{2\pi}
	\int_0^{r_{\mathrm{sc}} \mathrm{cos}	\varphi + \sqrt{ R_{\mathrm{c}}^2 - r_{\mathrm{sc}}^2 \mathrm{sin}^{2}\varphi}} \\
	{
		\frac{r \mathrm{d}r \mathrm{d} \varphi}
		{
			1 + s^{-1} P_{\mathrm{mc}}^{-1} r^{\alpha}		
		}
	} \label{eq:avgRateC1AscCov}
\end{multline}
\vspace{-0.1cm}	
\begin{equation}
	C^{\mathrm{(cov)}}_2 
	=
	1 - \mathrm{exp}
		\Big(
			-\frac{
				\tau_{\mathrm{sc}}
				\lambda_{\mathrm{cr}}
				(\lambda_{\mathrm{mr}} +  \lambda_{\mathrm{sc}})
			}{
				\mu 
				\gamma 
				\lambda_{\mathrm{sc}}^2 \lambda_{\mathrm{ut}}
			} 
		\Big) \label{eq:avgRateC2Cov}
\end{equation}
\vspace{-0.1cm}	
\begin{equation}
	C^{\mathrm{(cov)}}_3  =
	1 - \big(1 + F_{\mathrm{sc}} \big)^{1 - \eta} \label{eq:avgRateC3Cov}
\end{equation}	
}
%
{\small
\begin{multline}
	C^{\mathrm{(cap)}}_1 
	=
	\int_0^{R_{\mathrm{c}}}
		 e^{-\frac{( e^{\tau_{\mathrm{sc}}} - 1 )}{P_{\mathrm{sc}} r_{\mathrm{sc}}^{-\alpha}}}
		\mathcal{L}_{I_{\mathrm{ss}}}
		\Big(
			\frac{e^{\tau_{\mathrm{sc}}} - 1}{P_{\mathrm{sc}} r_{\mathrm{sc}}^{-\alpha}}
		\Big) \times \\
		\mathcal{L}_{I_{\mathrm{ms}}}
		\Big(
			\frac{e^{\tau_{\mathrm{sc}}} - 1}{P_{\mathrm{sc}} r_{\mathrm{sc}}^{-\alpha}}
		\Big)
		\frac{k}{\nu} \big( \frac{r_{\mathrm{sc}}}{\nu}\big)^{k - 1} e^{-(r_{\mathrm{sc}}/\nu)^k}
		\mathrm{d}r_{\mathrm{sc}} \label{eq:avgRateC1Cap}
\end{multline}
\vspace{-0.4cm}	
\begin{multline}
	\mathcal{L}_{I_{\mathrm{ss}}}(s)
	=
	\mathrm{exp}
	\Big(
		- \lambda_{\mathrm{sc}'}
		\int_{\mathbb{R}^2}
		\Big(
			1 - 
			\mathrm{exp}(
				-\bar{c} \nu(s, x)
			)
		\Big)
		\mathrm{d}
		x
	\Big) \\
	\int_{\mathbb{R}^2}
	\Big(
		\mathrm{exp}(
			-\bar{c} \nu(s, x)
		)
	\Big)
		f(x)
		\mathrm{d}
		x		\label{eq:avgRateLssCap}
\end{multline}
\vspace{-0.1cm}	
\begin{equation}	
	\nu(s, x) 
	= 
	\int_{\mathbb{R}^2}{\frac{f(y)}{1 + (s P_{\mathrm{sc}} \tilde{\ell}(y - x) )^{-1}} \mathrm{d}y}
\end{equation}
\vspace{-0.1cm}	
\begin{equation}
	\mathcal{L}_{I_{\mathrm{ms}}}(s)
	=
	\mathrm{exp}
	\Big(
		-\lambda_{\mathrm{mc}}
		\frac{
				(s P_{\mathrm{mc}})^{2/\alpha}
				\pi^2
				(2/\alpha)
		}{
			\mathrm{sin}(\pi \frac{2}{\alpha})
		}
	\Big) \label{eq:avgRateLmsCap}
\end{equation}
\vspace{-0.1cm}	
\begin{equation}	
	C^{\mathrm{(cap)}}_2 
	=
	1 - \mathrm{exp}
		\Big(
			-\frac{
				\tau_{\mathrm{mc}}
				\lambda_{\mathrm{cr}}
			}{
				\mu 
				\gamma 
				\lambda_{\mathrm{ut-s}}
			} 
		\Big) \label{eq:avgRateC2Cap}
\end{equation}
\vspace{-0.1cm}	
\begin{equation}
	C^{\mathrm{(cap)}}_3
	=
	1 - \big(1 + F_{\mathrm{sc}} \big)^{1 - \eta} \label{eq:avgRateC3Cap}
\end{equation}	
\vspace{-0.4cm}
}
%
\section{Numerical Results and Discussion}\label{sec:validation}
In this section, we validate our expressions via Monte-Carlo simulations. First, the impact of backhaul rate splitting ratio $\gamma$ on the average delivery rate is given in Figs. (\ref{fig:avgDeliveryRate:cov}a) and (\ref{fig:avgDeliveryRate:cap}a) for coverage and capacity-aided deployments respectively. Indeed, as seen from the figures, a dramatical increase in average delivery rate occurs which confirms our intuitions. The rate of decrement for the typical \ac{SU} is relatively slow compared to the typical \ac{MU}. In order to find a balance between average delivery rate of typical \glspl{MU} and \glspl{SU}, ensuring rate fairness, the plots show that one has to set $\gamma$ carefully. In all of these cases, we observe that having caching capabilities at \glspl{SBS} improves system performance in terms of average delivery rate. In other words, a heterogeneous network consists of \glspl{MBS} and cache-enabled \glspl{SBS} allows higher average delivery rates while ensuring fairness between users at different tiers.

Second, the impact of storage size $F_{\mathrm{sc}}$ on the average delivery rate is given in Figs. (\ref{fig:avgDeliveryRate:cov}b) and (\ref{fig:avgDeliveryRate:cap}b) for coverage and capacity-aided deployments respectively. It is shown that increasing storage size of \glspl{SBS} both in coverage and capacity-aided deployments yields higher average delivery rates. The increment of storage size in coverage-aided deployment is more visible compared to capacity-aided deployment and allows typical \glspl{SU} to achieve higher rates than typical \glspl{MU}. Given the fact that caching  monotonically improves the overall system performance of \glspl{SU}, more storage does not seem necessary if one considers a linear cost for storage size, even though more storage is desirable for improving the average delivery rate.
\begin{figure}[ht!]
\centering
\begin{tikzpicture}[scale=0.51, baseline]
	\begin{axis}[
    every tick label/.append style  =
    { 
        font=\large
    },
	  yticklabel style={
				/pgf/number format/fixed zerofill,
				/pgf/number format/precision=2
	  },
 		grid = major,
 		cycle list name={laneas-clustered1},
 		mark repeat={1},		
	  legend columns=3,
		legend entries={
			MU The.: $\bar{R}_{\mathrm{mu}}^{\mathrm{(cov)}}$,
			SU The. (No Cache): $\bar{R}_{\mathrm{su}}^{\mathrm{(cov)}}$,
			SU The.: $\bar{R}_{\mathrm{su}}^{\mathrm{(cov)}}$,
		  MU Sim.: $\bar{R}_{\mathrm{mu}}^{\mathrm{(cov)}}$,
		  SU Sim.  (No Cache): $\bar{R}_{\mathrm{su}}^{\mathrm{(cov)}}$,
		  SU Sim.: $\bar{R}_{\mathrm{su}}^{\mathrm{(cov)}}$
		},
		legend cell align=left,
	  legend to name=namedmethods4stogeoccov,
	  	legend style={font=\scriptsize},
		label style={font=\large},
 		xlabel={{\Large (a)} Rate splitting ratio $\gamma$},
		]

 		\addplot+table [col sep=comma] {\string"rateSplit-theoryCovAvgRateDeliveryMu.csv"};
 		\addplot+table [col sep=comma] {\string"rateSplit-theoryCovAvgRateDeliverySuNoCache.csv"};
 		\addplot+table [col sep=comma] {\string"rateSplit-theoryCovAvgRateDeliverySu.csv"};

 		\addplot+table [col sep=comma]{\string"rateSplit-simCovAvgRateDeliveryMu.csv"};		
 		\addplot+table [col sep=comma]{\string"rateSplit-simCovAvgRateDeliverySuNoCache.csv"};	 
 		\addplot+table [col sep=comma]{\string"rateSplit-simCovAvgRateDeliverySu.csv"}; 							  			  		
	\end{axis}
\end{tikzpicture}
\begin{tikzpicture}[scale=0.51, baseline]
	\begin{axis}[
    every tick label/.append style  =
    { 
        font=\large
    },
	  yticklabel style={
				/pgf/number format/fixed zerofill,
				/pgf/number format/precision=2
	  },
 		grid = major,
 		cycle list name={laneas-clustered1},
 		legend cell align=left,
 		mark repeat={1},		
 		legend style ={legend pos=north east},
		label style={font=\large},
 		xlabel={{\Large (b)} Storage size $F_{\mathrm{sc}}$},
		]

 		\addplot+table [col sep=comma] {\string"storageSz-theoryCovAvgRateDeliveryMu.csv"};
 		\addplot+table [col sep=comma] {\string"storageSz-theoryCovAvgRateDeliverySuNoCache.csv"};
 		\addplot+table [col sep=comma] {\string"storageSz-theoryCovAvgRateDeliverySu.csv"};

 		\addplot+table [col sep=comma]{\string"storageSz-simCovAvgRateDeliveryMu.csv"};		
 		\addplot+table [col sep=comma]{\string"storageSz-simCovAvgRateDeliverySuNoCache.csv"};	 
 		\addplot+table [col sep=comma]{\string"storageSz-simCovAvgRateDeliverySu.csv"}; 							  			  		
	\end{axis}
\end{tikzpicture}
\vspace*{0.0cm}
\\
\scriptsize
\hspace{0.65cm}\ref{namedmethods4stogeoccov}
\caption{\small Evolution of average delivery rate in coverage-aided deployment.
$\lambda_{\mathrm{cr}} = 1.0\times 10^{-5}$, $\lambda_{\mathrm{mc}} = 1.5\times 10^{-5}$, $\lambda_{\mathrm{sc}'} = 5.5\times 10^{-5}$, $\lambda_{\mathrm{ut}} = 12.8\times 10^{-5}$ unit/m$^2$;
$P_{\mathrm{mc}} = 16$, $P_{\mathrm{sc}} = 3$ Watt;
$\tau_{\mathrm{mc}} = \tau_{\mathrm{sc}} = 4$ bits/s/Hz;
$\alpha = 4$	;
$R_{\mathrm{c}} = 80$ meters;
$\mu = 30$ bits/s/Hz; 
$\gamma = 0.6$;
$f_0 = 500$ GByte;
$F_{\mathrm{sc}} = 4$ GByte;
$\eta = 1.45$.}
\label{fig:avgDeliveryRate:cov}
\end{figure}
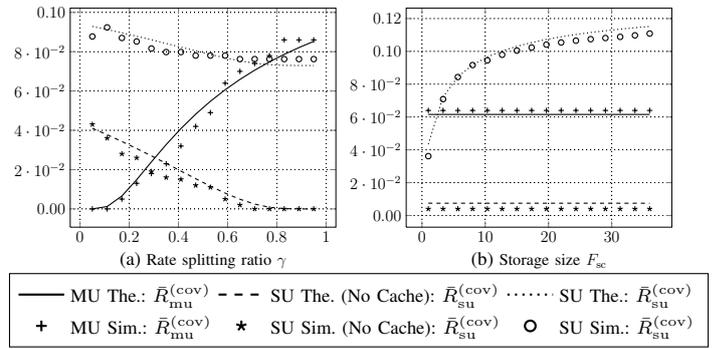

\begin{figure}[ht!]
\centering
\begin{tikzpicture}[scale=0.51, baseline]
	\begin{axis}[
    every tick label/.append style  =
    { 
        font=\large
    },
	  yticklabel style={
				/pgf/number format/fixed zerofill,
				/pgf/number format/precision=2
	  },
 		grid = major,
 		cycle list name={laneas-clustered1},
 		mark repeat={1},		
	  legend columns=3,
		legend entries={
			MU The.: $\bar{R}_{\mathrm{mu}}^{\mathrm{(cap)}}$,
			SU The. (No Cache): $\bar{R}_{\mathrm{su}}^{\mathrm{(cap)}}$,
			SU The.: $\bar{R}_{\mathrm{su}}^{\mathrm{(cap)}}$,
		  MU Sim.: $\bar{R}_{\mathrm{mu}}^{\mathrm{(cap)}}$,
		  SU Sim.  (No Cache): $\bar{R}_{\mathrm{su}}^{\mathrm{(cap)}}$,
		  SU Sim.: $\bar{R}_{\mathrm{su}}^{\mathrm{(cap)}}$
		},
		legend cell align=left,
	    legend to name=namedmethods4stogeoccov,
	    legend style={font=\scriptsize},
		label style={font=\large},
 		xlabel={{\Large (a)} Rate splitting ratio  $\gamma$},
		]

 		\addplot+table [col sep=comma] {\string"rateSplit-theoryCapAvgRateDeliveryMu.csv"};
 		\addplot+table [col sep=comma] {\string"rateSplit-theoryCapAvgRateDeliverySuNoCache.csv"};
 		\addplot+table [col sep=comma] {\string"rateSplit-theoryCapAvgRateDeliverySu.csv"};

 		\addplot+table [col sep=comma]{\string"rateSplit-simCapAvgRateDeliveryMu.csv"};		
 		\addplot+table [col sep=comma]{\string"rateSplit-simCapAvgRateDeliverySuNoCache.csv"};	 
 		\addplot+table [col sep=comma]{\string"rateSplit-simCapAvgRateDeliverySu.csv"}; 
						  			  		
	\end{axis}
\end{tikzpicture}
\begin{tikzpicture}[scale=0.51, baseline]
	\begin{axis}[
    every tick label/.append style  =
    { 
        font=\large
    },
	  yticklabel style={
				/pgf/number format/fixed zerofill,
				/pgf/number format/precision=2
	  },
 		grid = major,
 		cycle list name={laneas-clustered1},
 		legend cell align=left,
 		mark repeat={1},		
 		legend style ={legend pos=north east},
		label style={font=\large},
 		xlabel={{\Large (b)} Storage size $F_{\mathrm{sc}}$},
		]

 		\addplot+table [col sep=comma] {\string"storageSz-theoryCapAvgRateDeliveryMu.csv"};
 		\addplot+table [col sep=comma] {\string"storageSz-theoryCapAvgRateDeliverySuNoCache.csv"};
 		\addplot+table [col sep=comma] {\string"storageSz-theoryCapAvgRateDeliverySu.csv"};

 		\addplot+table [col sep=comma]{\string"storageSz-simCapAvgRateDeliveryMu.csv"};		
 		\addplot+table [col sep=comma]{\string"storageSz-simCapAvgRateDeliverySuNoCache.csv"};	 
 		\addplot+table [col sep=comma]{\string"storageSz-simCapAvgRateDeliverySu.csv"}; 							  			  		
	\end{axis}
\end{tikzpicture}
\vspace*{0.0cm}
\\
\scriptsize
\hspace{0.65cm}\ref{namedmethods4stogeoccov}
\caption{\small Evolution of average delivery rate in capacity-aided deployment. 
$\lambda_{\mathrm{cr}} = 1.0\times 10^{-5}$, $\lambda_{\mathrm{mc}} = 1.5\times 10^{-5}$, $\lambda_{\mathrm{sc}'} = 1.5\times 10^{-5}$, $\lambda_{\mathrm{ut-m}} = 3.0\times 10^{-5}$	unit/m$^2$;
$\hat{c} = 3$ units;
$P_{\mathrm{mc}} = 16$, $P_{\mathrm{sc}} = 3$ Watt;
$\tau_{\mathrm{mc}} = \tau_{\mathrm{sc}} = 4$ bits/s/Hz;
$\alpha = 4$;
$R_{\mathrm{c}} = 80$ meters;
$\mu = 30$ bits/s/Hz 
$\gamma = 0.6$;
$f_0 = 500$, $F_{\mathrm{sc}} = 4$ GByte;
$\eta = 1.45$.}
\label{fig:avgDeliveryRate:cap}
\vspace*{-0.55cm}
\end{figure}
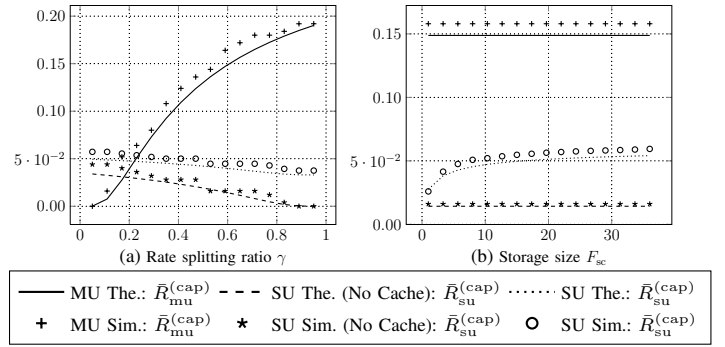

\bibliographystyle{IEEEtran}
\bibliography{references}
\end{document}